%% file: conference.tex
\documentclass[10pt, conference, letterpaper]{IEEEtran}
\IEEEoverridecommandlockouts

\usepackage{cite}
\usepackage{amsmath,amssymb,amsfonts}
\usepackage{amsthm}
\usepackage{graphicx}
\usepackage{textcomp}
\usepackage[dvipsnames,table]{xcolor}
\usepackage{colortbl}
\usepackage{subcaption}
\usepackage{algorithm}
\usepackage{algorithmic}
\usepackage{multirow}
\usepackage{array}
\usepackage{makecell}
\usepackage{caption}
\usepackage{booktabs}
\usepackage{ragged2e}
\usepackage{pifont}
\usepackage{xspace}
\usepackage[draft]{hyperref}
\usepackage{cleveref}

\usepackage{tcolorbox}
\tcbset{before skip=3pt, after skip=4pt, top=2pt, bottom=2pt, boxsep=3pt, left=5pt, right=5pt}
\newtheorem{observation}{Observation}
\newtheorem{proposition}{Proposition}

\colorlet{phaseI}{rgb:red!2,65;green!30,60;blue!20,125}
\colorlet{phaseII}{rgb:red!2,65;green!30,90;blue!20,125}
\colorlet{phaseIII}{rgb:red!60,100;green!20,90;blue!30,125}

\newcommand{\Comment}[1]{\hfill$\triangleright$ \textit{#1}}

\newcommand{\first}[1]{\textbf{#1}}

\definecolor{deepred}{rgb}{0.6,0,0}
\definecolor{crystalblue}{rgb}{0.20,0.40,0.80}
\definecolor{darkgreen}{RGB}{0, 181, 18}
\definecolor{darkred}{RGB}{252, 90, 90}
\newcommand{\gain}[1]{$\uparrow$ #1}
\newcommand{\mygreen}[1]{\cellcolor{darkgreen!#1}}
\definecolor{DeltaBg}{HTML}{D4F2D7}
\definecolor{SearchBg}{HTML}{C2E6F5}
\definecolor{AgenticBg}{HTML}{F5C2CC}
\definecolor{MathBg}{HTML}{E6D4F2}
\definecolor{ScienceBg}{HTML}{FBE0BC}

\newcommand{\sys}{FedSceneX\xspace}

\newcommand{\best}[1]{\textcolor{red}{\textbf{#1}}}

\newcommand{\std}[1]{$_{\pm\text{\scriptsize #1}}$}

\newcommand{\venue}[1]{{\scriptsize\color{gray}[#1]}}

\usepackage{stfloats}
\usepackage{balance}
\graphicspath{{figures/}}

\begin{document}

\bstctlcite{IEEEexample:BSTcontrol}

\title{FedSceneX: Time-to-Target Orchestration for Same-Scene Multimodal Federated Edge Learning}

\author{
\IEEEauthorblockN{Dhe yeong Tchalla\IEEEauthorrefmark{1},
Beining Wu\IEEEauthorrefmark{1},
Jun Huang\IEEEauthorrefmark{1},
Shuyang Gu\IEEEauthorrefmark{2},
and Qiang Duan\IEEEauthorrefmark{3}}
\IEEEauthorblockA{\IEEEauthorrefmark{1}Department of Electrical Engineering and Computer Science, South Dakota State University, Brookings, SD, USA\\
\IEEEauthorrefmark{2}Subhani Department of Computer Information Systems, Texas A\&M University-Central Texas, Killeen, TX, USA\\
\IEEEauthorrefmark{3}Department of Information Sciences and Technology, The Pennsylvania State University, Abington, PA, USA\\
Email: \{Dheyeong.Tchalla, Wu.Beining\}@jacks.sdstate.edu, Jun.Huang@sdstate.edu, s.gu@tamuct.edu, qxd2@psu.edu}
}

\maketitle

\begin{abstract}
Federated learning at the sensing edge is typically evaluated by communication rounds, yet a round does not represent a fixed amount of work. Even on identical hardware, the methods we compare require 3.3 to 9.8 hours per round, which makes round-based comparisons misleading. The problem is more obvious for same-scene multimodal clients, since camera, video, LiDAR, and radar workloads differ substantially in training and communication cost, while existing methods treat the modality composition of each round as fixed. To address it, we introduce \sys{}, an orchestrator that jointly determines round composition to maximize learning value per active hour. The optimization method, \emph{Value-per-Hour Pricing} (VHP), converts the fractional objective through a parametric transformation and dualizes the uplink constraint, yielding a closed-form client price whose weights capture resource shadow costs. Based on these prices, \sys{} selects clients subject to a modality coverage constraint, allocates precision through reverse water filling, and assigns updates to edge servers. On the full nuScenes benchmark with fifteen clients and twelve baselines, \sys{} reduces the active time per round to 3.31 hours, compared with 4.85 to 9.78 hours for the baselines. Across all random seeds, it achieves the highest accuracy within a twenty-hour budget while preserving all four modalities. Its advantage persists from ten to forty-five hours, after which conventional methods overtake it.
\end{abstract}

\begin{IEEEkeywords}
Federated learning, multimodal learning, edge intelligence, network orchestration, client selection, resource allocation, autonomous driving.
\end{IEEEkeywords}

\input{sections/introduction}
\input{sections/related_work}
\input{sections/motivation}
\input{sections/system_design}
\input{sections/evaluation}
\input{sections/conclusion}

\bibliographystyle{IEEEtran}
\bibliography{ref}

\end{document}

%% file: sections/introduction.tex
\section{Introduction}

\IEEEPARstart{F}{ederated} learning trains a shared model without moving raw data off the devices that hold it~\cite{McMahan2017AISTATS,Kairouz2021FTML,Duan2023COMST,Wu2025WASA,Wu2026ARXIVForget,Wu2026ARXIVCrystalMem}. In autonomous driving and urban sensing, those devices are the sensors of one platform. Cameras, video streams, LiDAR, and radar observe the same scene at the same instant and share a scene-level label, but they agree on little else: their encoders differ in depth and cost, their updates differ in size, and the semantic classes they resolve differ by sensor~\cite{Feng2023ARXIV,Sun2024ECCV,Fang2025JSAC,Fang2025ARXIV,Tchalla2026ACR,Wu2026ARXIVPRISM}.

Two lines of work make such a federation practical. Heterogeneity-aware and personalized optimizers correct client drift or split the model, through dynamic regularization, alternating optimization, adaptive server updates, or a personalized head~\cite{Acar2021ICLR,Gong2022ICDE,Reddi2021ICLR,Oh2022ICLR,Li2021CVPR,Ding2025IPCCC,Ding2026TAAS,Ding2026ICNC,Wu2023MPE}, while communication-efficient methods shrink or ration the payload by quantizing updates, restructuring them, or admitting only the clients that meet a deadline~\cite{Reisizadeh2020AISTATS,Konecny2016ARXIV,Nishio2019ICC,Wu2025ToN,Huang2025TMC,Wu2026COMST,Wu2023ACCESS,Dong2026TCCN,Pudasaini2026HPSR,Wu2025RACS,Xing2026ACR,Pan2023SCIS,Fang2026ARXIVLLMSearch}. Both are measured per communication round, and that unit is the problem. A round is not a fixed amount of work: its duration is set by which clients are selected and how much local computation each is asked to perform, so on identical hardware and under the same local budget the methods we compare spend between 3.3 and 9.8 hours in one round. Two consequences follow. First, \emph{evaluation} misreports progress, because a convergence curve drawn against the round index compares quantities of unequal price. Second, \emph{control} is aimed at the wrong resource, because a round's time is spent almost entirely on local computation, so a policy that only compresses what is sent cannot change how long the round takes.

Same-scene multimodal learning turns this from an accounting problem into a design one, since the composition of a round now decides its price. A round of cameras is cheap and semantically narrow, a round with video and LiDAR is expensive and informative, and a round that omits radar gives up a quarter of the sensing problem. Multimodal federated learning studies how such clients should be aligned, fused, or reconstructed when a modality is missing~\cite{Feng2023ARXIV,Yuan2023ARXIV,Sun2024ECCV,Wu2026ARXIV,Ding2026ARXIVEASE}, but it takes the modality mixture of a round as given rather than choosing it. Client selection comes closest, scoring clients by a utility that mixes statistical value with systems cost~\cite{Lai2021OSDI}, yet the score decides participation alone, so the precision of an admitted update and the server that receives it stay outside the policy. Thus, an orchestrator that has to reach a usable model within a fixed number of hours has no single quantity to optimize, and instead tunes a selection threshold, a compression schedule, and a placement heuristic that cannot trade one budget against another.

To address this, we introduce \sys{}, a modality-aware orchestrator that treats the composition of a round as the decision variable. \sys{} leaves the local objective, the aggregation rule, and the model architecture unchanged, and modality-specific encoders never leave the device. Each round profiles clients with a few privacy-safe scalars, prices each client with one number, and issues a \emph{contract} that fixes whether the client uploads, at what precision it transmits, and where its update is aggregated. Unlike prior orchestration, the objective \sys{} maximizes is learning value per active hour rather than per round, which makes the price of a round part of the decision instead of an outcome of it.

Optimizing a ratio of this kind is harder than optimizing a sum, and the round's active time is a makespan that does not separate across clients. We propose \emph{Value-per-Hour Pricing} (VHP), which removes both obstacles. Instead of scoring clients by a weighted sum chosen by hand, VHP applies a parametric transform to the fractional objective and dualizes the uplink budget, so the coupled program collapses into one closed-form scalar per client whose two negative weights are the shadow price of a byte and the worth of an hour. The same scalar then answers all three questions: a positivity threshold admits clients under a modality-coverage floor, a reverse water-filling rule quantizes every admitted update to a common distortion floor set by the uplink price, and a least-loaded rule places it on an edge server with a bounded loss in makespan. One number replaces the separate thresholds a modality-aware deployment would otherwise tune by hand.

\begin{itemize}
\item We identify the price of a round as the variable that federated evaluation omits, and we formulate round composition as the maximization of learning value per active hour under an uplink budget and a modality-coverage floor.

\item We propose VHP, which turns that fractional program into one closed-form price per client, and we read that price three ways: as a participation threshold, as a reverse water-filling rule for precision, and as a placement rule whose makespan is within a bounded factor of the optimum.

\item On the complete nuScenes benchmark with fifteen heterogeneous clients and twelve baselines, we report the boundary of the claim together with the claim. A \sys{} round costs $3.31\pm0.13$ active hours against $4.85$ to $9.78$, every seed leads the strongest baseline within a twenty-hour budget, and the lead reverses beyond forty-five hours. An ablation separates the contract into the axes it acts on, and the scalars that make pricing possible identify a client's modality with accuracy 1.00.
\end{itemize}

Contrary to the intuition that a communication-constrained system is made faster by sending less, the network is not where a round spends its time: the entire network cost of a \sys{} round is 1.99 seconds, or 0.016\% of it, so compression acts on a fraction too small to recover. To our knowledge, \sys{} is the first system to price participation, precision, and placement for same-scene multimodal clients with one utility, and to report the operating region in which doing so is the right choice.

%% file: sections/related_work.tex
\section{Related Work}
\label{sec:related}

\subsection{Heterogeneity-Aware and Personalized Federated Learning}
To train on data that never leaves the device, federated learning averages local updates instead of gradients over pooled data~\cite{McMahan2017AISTATS,Kairouz2021FTML,Duan2023COMST}. Li \textit{et al.}~\cite{Li2020MLSYS} propose \emph{FedProx}, which adds a proximal term to limit client drift, and later work replaces the server rule or the local objective with dynamic regularization, alternating optimization, elastic aggregation, and adaptive server updates~\cite{Acar2021ICLR,Gong2022ICDE,Chen2023CVPR,Reddi2021ICLR}. A second group splits the model instead, keeping a personalized head, local normalization statistics, or a shared representation on the device~\cite{Oh2022ICLR,Li2021ICLR,Collins2021ICML}, and a third regularizes the representation against skewed or shifted client distributions~\cite{Li2021CVPR,Guo2023ICML,Nguyen2022NeurIPS,Wu2026TNSE,Wu2026ARXIVLifecycle,Wu2026ICDCS}. These methods change what a round optimizes, but they leave the price of the round outside the policy. \sys{} is complementary to all of them: it changes what a round costs, and it leaves the local objective and the aggregation rule untouched.

\subsection{Multimodal Federated Learning}
Because the clients on one sensing platform carry different sensors, their encoders and their updates are no longer comparable. Sun \textit{et al.}~\cite{Sun2024ECCV} propose \emph{FedCoLa}, which aligns modality-specific transformers across clients, while benchmarks and fusion methods study missing modalities and selective modality communication~\cite{Feng2023ARXIV,Yuan2023ARXIV,Wu2026ARXIV,Wu2026ARXIVPRISM,Ding2026ARXIVEASE}. These methods make heterogeneous modalities learn together, but they take the modality mixture of a round as given. They are compatible with \sys{}, which schedules that mixture and can carry any of their fusion rules.

\subsection{Budgeted Round Composition}
Closest to ours, a few systems compose a round by what it will cost rather than by what it will learn. Lai \textit{et al.}~\cite{Lai2021OSDI} propose \emph{Oort}, which is similar to ours in scoring clients by a utility that mixes statistical value with systems cost, but the score selects participants only, so precision and placement stay outside the policy and modality coverage is not a constraint. Deadline-based admission~\cite{Nishio2019ICC}, periodic averaging with quantized updates~\cite{Reisizadeh2020AISTATS}, and structured-update compression~\cite{Konecny2016ARXIV} each cap one resource in isolation. \sys{} differs in its goal: one utility prices participation, precision, and placement together, against the hours a deployment spends rather than the rounds it completes.

%% file: sections/motivation.tex
\section{Background and Motivation}
\label{sec:motivation}

\subsection{Multimodal Federated Learning at the Edge}

\noindent\textbf{Same-scene multimodal clients.}
In autonomous driving and urban sensing, several sensors observe one scene at the same instant. On nuScenes~\cite{Caesar2020CVPR}, six cameras, two video streams, two LiDAR partitions, and five radars form fifteen federated clients that share a scene-level label but not a representation. Their encoders differ in depth and cost, their updates differ in size, and their evidence differs by semantic class. Federated learning keeps the raw streams on the device~\cite{McMahan2017AISTATS,Kairouz2021FTML,Wu2026MNET,Ding2026ARXIVTwinLoop,Ding2026ICDCS}, and multimodal federated learning aligns or fuses the resulting representations~\cite{Feng2023ARXIV,Yuan2023ARXIV,Sun2024ECCV}. Neither decides how much of a training round each modality is allowed to consume.

\noindent\textbf{What the server may observe.}
Data locality restricts the controller to scalars. A client can report how far its loss fell, how large its update is, and what uplink it measured, but not what it saw. Any orchestration policy has to be built on this narrow channel.

\subsection{Observation: A Round Is Not a Unit of Progress}
\label{obs:round}

\begin{tcolorbox}
\begin{observation}
A round costs what its clients cost, so ranking methods by the rounds they need reorders them relative to the time those rounds take.
\end{observation}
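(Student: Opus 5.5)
The Observation combines an accounting identity with an existence claim, so I would argue it in two steps: first formalize ``a round costs what its clients cost,'' then exhibit the reordering.

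\textbf{Step 1: the cost of a round.} For a method $m$, let round $r$ admit the client set $S_r^m$. Let client $i$ incur local compute time $c_i$ and uplink time $u_i = b_i / B_i$, where $b_i$ is its update size and $B_i$ its measured uplink. Because the server waits for the slowest admitted client, the active time of the round is the makespan
\[
T_r^m = \max_{i \in S_r^m} \left( c_i + u_i \right) + a_r^m ,
\]
where $a_r^m$ is the aggregation time. Progress is then measured against the wall-clock
\[
H^m(R) = \sum_{r \le R} T_r^m ,
\]
not against the round count $R$. Since $c_i$ and $b_i$ are fixed by the modality encoder, $T_r^m$ depends only on the composition of $S_r^m$. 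This makes the first clause of the Observation precise: the per-round price is a function of who is selected, not a constant.

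\textbf{Step 2: the reordering.} Let $R^m(\epsilon)$ be the number of rounds method $m$ needs to reach target accuracy $\epsilon$. Its time-to-target is $H^m(R^m(\epsilon))$, which is at least $R^m(\epsilon)\, \underline{T}^m$ and at most $R^m(\epsilon)\, \overline{T}^m$, where $\underline{T}^m$ and $\overline{T}^m$ are the smallest and largest round times of $m$. Take two methods $m$ and $m'$ with $R^m(\epsilon) < R^{m'}(\epsilon)$. The round ranking is inverted whenever
\[
R^m(\epsilon)\, \underline{T}^m > R^{m'}(\epsilon)\, \overline{T}^{m'} ,
\]
that is, whenever the ratio of per-round costs exceeds the ratio of round counts. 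To show that this condition actually holds, I would instantiate it with the measured per-round times from identical hardware, which range from $3.3$ to $9.8$ hours (a factor of about $3$). It then suffices to find any pair of methods whose round counts to a common target differ by less than that factor in the opposite direction. A method that admits video and LiDAR clients every round is a natural candidate: it plausibly reaches the target in fewer rounds while paying the expensive encoders in each of them.

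\textbf{Main obstacle.} The hard part is that the inversion is empirical, not universal. No bound ties $R^m$ to $T^m$, so the statement is an existence claim and cannot be derived from the model alone. I would therefore present Step 1 as the rigorous part and support Step 2 with the measured curves: plot accuracy against rounds and against $H^m$, and show that the ordering of at least one pair of methods swaps. Step 1 already guarantees that the two axes disagree whenever the per-round cost is not constant across methods.
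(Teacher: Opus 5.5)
Your plan is sound and rests on the same arithmetic as the paper: cumulative active time is a sum of per-round costs, and those costs differ by up to about $3\times$ across methods, so fewer rounds need not mean fewer hours. The difference is which axis you hold fixed.

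The paper states no general condition. It fixes a thirty-hour budget and exhibits one pair (Table~\ref{tab:roundcost}): FedBABU finishes three rounds at Macro-F1 $0.6447$, and FedSceneX about nine at $0.6559$. So FedSceneX looks slower per round but is ahead per hour. It then grounds the first clause by measurement rather than by a cost model, splitting a round into $95.1\%$ local training, $4.9\%$ evaluation, and $0.016\%$ network time.

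You hold a target fixed and derive the sufficient condition $R^m(\epsilon)\,\underline{T}^m > R^{m'}(\epsilon)\,\overline{T}^{m'}$.
\begin{itemize}
\item \emph{What it buys:} sharpness. It shows that the paper's informal ``three times as many rounds'' could not by itself produce an inversion against a cost ratio of $9.75/3.31\approx 2.95$, since $3\times 3.31 > 9.75$. The inversion actually comes from FedSceneX passing FedBABU's three-round value early: every seed reaches at least $0.6486$ within twenty hours, against roughly $29$ hours for FedBABU to reach $0.6447$.
\item \emph{What it costs:} robustness. The paper avoids crossing times on purpose, because Macro-F1 is flat near $0.66$ and the crossing time varies by $24.5$ hours across seeds. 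Any target above FedSceneX's peak of $0.6713$ is never reached. You must therefore take $\epsilon$ low, around $0.645$.
\end{itemize}

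Three smaller points.
\begin{itemize}
\item The $3.3$ to $9.8$ hour range spans different methods and averages over whole runs. Your bound needs the extreme round times of the specific pair. It is cleaner to compare the cumulative $H^m$ read off the trajectories, which is what counting rounds completed within thirty hours does.
\item Your makespan is not the paper's accounting. There, active time is the training and evaluation time of the selected clients, scheduled as summed loads over two edge servers, with local epochs set per client. Under a maximum over clients with $c_i$ fixed by modality, the full contract and the variant without it would cost the same, since both keep all four modalities. Yet they measure $3.52$ and $9.21$ hours (Table~\ref{tab:ablation}).
\item Your closing sentence overstates Step~1. Unequal per-round costs make the two axes non-proportional, but they reorder methods only when your Step~2 inequality holds. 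A method that is both cheaper per round and needs fewer rounds leads on both axes.
\end{itemize}
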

\end{tcolorbox}

\begin{table}[t]
\centering
\caption{Thirty active hours buy three rounds for the baselines and nine for \sys{}. \colorbox{DeltaBg}{$\Delta$}: Macro-F1 points at that budget.}
\label{tab:roundcost}
\footnotesize
\setlength{\tabcolsep}{3.5pt}
\renewcommand{\arraystretch}{1.15}
\begin{tabular}{l|c|cc|c}
\toprule
& \cellcolor{SearchBg}\textbf{Cost}
& \multicolumn{2}{c|}{\cellcolor{AgenticBg}\textbf{Within 30 h}}
& \cellcolor{DeltaBg}\textbf{$\Delta$} \\
\cmidrule(lr){2-2}\cmidrule(lr){3-4}
\textbf{Method} & \textbf{h/rnd} & \textbf{Rounds} & \textbf{Ma-F1} & \\
\midrule
FedBABU~\cite{Oh2022ICLR} & 9.75 & 3 & 0.6447 & \mygreen{22}\gain{1.12} \\
\rowcolor{crystalblue!5}
FedADMM~\cite{Gong2022ICDE} & 9.26 & 3 & 0.6486 & \mygreen{15}\gain{0.73} \\
MOON~\cite{Li2021CVPR} & 6.20 & 3 & 0.6481 & \mygreen{16}\gain{0.78} \\
\midrule
\rowcolor{crystalblue!10}
\textbf{\sys{}} & \best{3.31}\std{0.13} & \best{8.6}\std{0.5} & \best{0.6559}\std{0.0057} & -- \\
\bottomrule
\end{tabular}
\end{table}

\begin{figure}[t]
\centering
\begin{subfigure}[t]{0.49\columnwidth}
\centering
\includegraphics[width=\linewidth]{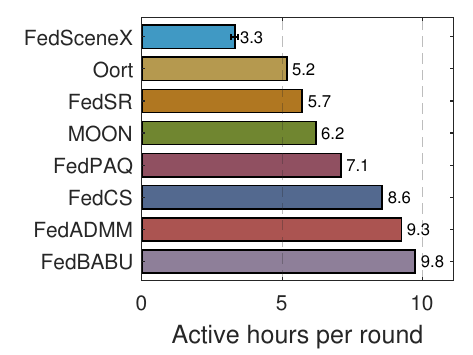}
\caption{Cost of a round}
\label{fig:cost_a}
\end{subfigure}\hfill
\begin{subfigure}[t]{0.49\columnwidth}
\centering
\includegraphics[width=\linewidth]{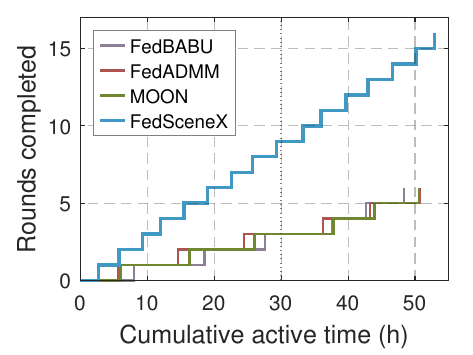}
\caption{What the cost buys}
\label{fig:cost_b}
\end{subfigure}
\caption{(a) Active hours per round, \sys{} over five seeds. (b) Rounds completed against active time; the dotted line marks thirty hours.}
\label{fig:cost}
\end{figure}

\Cref{tab:roundcost} makes the disagreement concrete. Given thirty active hours, FedBABU finishes three rounds and reaches Macro-F1 $0.6447$, while \sys{} finishes nine and reaches $0.6559$. Counted in rounds, \sys{} looks like the slower learner, since it needs three times as many of them to arrive at a similar place. Counted in the hours a deployment actually spends, it arrives first, because a \sys{} round costs $3.31$ hours where theirs cost $6.20$ to $9.75$.

Where a round's time goes explains why. Within a \sys{} round, local training accounts for $95.1\%$ of the wall-clock time and client-side evaluation for a further $4.9\%$, while the entire network cost of the round, base latency, queueing, and transmission together, is 1.99 seconds, or $0.016\%$ of it. Compression acts on that fraction alone. This does not make uploaded volume irrelevant, since uplink capacity is shared and metered, but a method cannot buy wall-clock time by compressing what already takes two seconds. What is left is the composition of the round, namely which modality encoders run and for how long.

\noindent\textbf{Opportunities of composing rounds.}
If a round costs what its clients cost, an orchestrator should choose the composition of each round rather than only the number of participants. Modality then becomes a scheduling variable instead of a property of the dataset.

\subsection{Observation: One Utility Prices Three Decisions}
\label{obs:oneutility}

\begin{tcolorbox}
\begin{observation}
Who uploads, at what precision, and where the update is aggregated are three forms of one question: whether a client's update is worth the resources the round will spend on it.
\end{observation}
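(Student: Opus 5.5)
We make the observation precise by exhibiting one program whose optimality conditions produce all three decisions, so the proof is a derivation of a single per-client score. The model follows the rest of the paper. Client $i$ reports a learning value $v_i$ (its loss decrease), an update size $s_i$, and a compute time $t_i$. The decisions are participation $x_i\in\{0,1\}$, precision $b_i$ (which sets the payload $s_i(b_i)$ and a distortion $D_i(b_i)$), and an edge server $e(i)$. The round objective is value per active hour,
\[
\max_{x,b,e}\ \frac{\sum_i x_i\bigl(v_i-D_i(b_i)\bigr)}{T(x,b,e)},
\]
where $T$ is the makespan over servers. The program is subject to the uplink budget $\sum_i x_i s_i(b_i)\le B$ and a modality-coverage floor requiring at least one admitted client per modality. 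The goal is to show that a single scalar $p_i$ determines $x_i$, $b_i$ and $e(i)$.

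\textbf{Removing the ratio.} We apply the Dinkelbach transform. For a parameter $\lambda$, define $F(\lambda)=\max\,\sum_i x_i(v_i-D_i)-\lambda T$. The optimal ratio $\lambda^\star$ is exactly the root of $F(\lambda)=0$, and $F$ is decreasing and convex in $\lambda$, so a monotone iteration over $\lambda$ reaches that root. This turns $\lambda$ into the worth of an hour.

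\textbf{Decoupling across clients.} Next we dualize the uplink constraint with a multiplier $\mu\ge 0$, which is the shadow price of a byte. To make the Lagrangian separate by client, we replace the makespan $T$ by a per-client surrogate. On a load-balanced placement, each admitted client adds roughly $t_i/m$ to $T$, where $m$ is the number of edge servers. Minimizing over $b_i$ for fixed $\mu$ then gives a condition of the form $-D_i'(b_i)=\mu\,s_i'(b_i)$. Under the usual rate-distortion model $D_i\propto 2^{-2b_i}$, this is reverse water-filling: every admitted update is quantized to a common distortion floor set by $\mu$. Substituting that $b_i$ back gives the closed-form score
\[
p_i=v_i-D_i(b_i^\star)-\mu\,s_i(b_i^\star)-\lambda\,t_i/m,
\]
whose two negative weights are $\mu$ and $\lambda$.

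\textbf{The three decisions.} Participation follows from the positivity threshold $x_i=\mathbf 1[p_i>0]$. The coverage floor is restored by adding, for any uncovered modality, the client of that modality with the largest $p_i$. Placement is least-loaded (list scheduling) in decreasing order of $t_i$, which Graham's bound keeps within a factor $4/3-1/(3m)$ of the optimal makespan. That bound also justifies the per-client surrogate used above, up to that same factor.

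\textbf{Main obstacle.} The hardest step is the makespan coupling. $T$ is a maximum, not a sum, so exact separation fails. The honest claim is therefore bounded suboptimality rather than exact optimality: the surrogate objective and the true objective differ by at most the Graham factor applied to $\lambda T$. A second difficulty is the duality gap from the binary $x_i$. I would handle it by convexifying, taking the dual of the LP relaxation, and noting that at most one client per constraint is fractional, namely the single uplink constraint plus one per modality.
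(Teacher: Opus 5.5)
Your route is the paper's own. Dinkelbach's transform prices an hour, a multiplier on the uplink prices a byte, and list scheduling makes the makespan separable. The resulting per-client bracket is then read three ways: as a positivity threshold with a best-of-modality coverage repair, as reverse water-filling to the common floor $D^\star$ of Proposition~\ref{prop:water}, and as least-loaded placement. Two of your changes are reasonable refinements: LPT in place of plain list scheduling, and the LP-relaxation argument for the integrality gap, which the proof of Proposition~\ref{prop:price} passes over. You leave out the proportional-fair rate rule of Proposition~\ref{prop:pf} and the priority and slice maps. The paper also supports the observation empirically with the ablation in Table~\ref{tab:oneutility}.

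The step that fails is your claim that Graham's bound \emph{also justifies the per-client surrogate used above, up to that same factor}, and your bounded-suboptimality conclusion rests on it. Graham's constant ($2-1/m$ for list scheduling, $4/3-1/(3m)$ for LPT) compares the heuristic makespan of an admitted set $S$ with the optimal makespan of that same $S$. It says nothing about how far either is from $\frac1m\sum_{i\in S}t_i$. What actually holds is
\[
\frac1m\sum_{i\in S}t_i\;\le\;T^{\mathrm{opt}}(S)\;\le\;T^{\mathrm{LS}}(S)\;\le\;\frac1m\sum_{i\in S}t_i+\Bigl(1-\frac1m\Bigr)\max_{i\in S}t_i ,
\]
and the first gap can be a factor of $m$. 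Take $m=2$, one admitted client of duration $1$, and the rest negligible: the surrogate charges about $1/2$ while the round lasts $1$, a factor $2$ that exceeds both $7/6$ and $3/2$. So \emph{the Graham factor applied to $\lambda T$} is false as a multiplicative guarantee, and vacuous as an additive one, since it exceeds the whole time penalty. In particular, the surrogate undercharges the long clients that sit on the critical path.

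The correct statement is the sandwich $\frac1m\sum_{i\in S}t_i\le T(S)\le\sum_{i\in S}t_i$. It makes the surrogate maximizer an $m$-approximation of the true value-per-hour ratio, or equivalently gives an additive loss of at most $\lambda(1-1/m)\max_i t_i$ in the parametric objective. Graham's bound is needed only for the separate claim that least-loaded placement of a fixed admitted set is near-optimal. The paper's own sentence is no tighter, passing from Graham's bound to an unquantified \emph{bounded loss}, but it does not claim Graham's factor for the surrogate.

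Finally, in your construction $p_i$ alone does not determine $b_i$ or $e(i)$. Here $b_i^\star$ is fixed by $\mu$ and the client's variance before $p_i$ is formed, and LPT orders by $t_i$. What you actually establish, which is the defensible content of the observation, is that all three decisions are read off one Lagrangian with the same two multipliers.
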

\end{tcolorbox}

\begin{table}[t]
\centering
\caption{Each row replaces one family of contract decisions with a uniform policy. \colorbox{DeltaBg}{$\Delta$}: advantage of the contract, in each row's units.}
\label{tab:oneutility}
\footnotesize
\setlength{\tabcolsep}{4.5pt}
\renewcommand{\arraystretch}{1.15}
\begin{tabular}{ll|cc|c}
\toprule
& & \multicolumn{2}{c|}{\cellcolor{ScienceBg}\textbf{Policy}}
& \cellcolor{DeltaBg}\textbf{$\Delta$} \\
\cmidrule(lr){3-4}
\textbf{Decision} & \textbf{Metric} & \textbf{Uniform} & \textbf{Contract} & \\
\midrule
Participate & Active h/round & 9.21 & \first{3.52} & \mygreen{23}\gain{5.69} \\
\rowcolor{crystalblue!5}
Precision & Upload MB/round & 7.27 & \first{2.88} & \mygreen{18}\gain{4.39} \\
Place & Active h/round & 6.60 & \first{3.52} & \mygreen{12}\gain{3.08} \\
\bottomrule
\end{tabular}
\end{table}

These decisions are usually taken by different mechanisms: a selection rule, a compression schedule, and a placement heuristic, each with a threshold of its own. Every threshold has to be retuned when the modality mixture changes, and none of them can trade one budget against another. One utility avoids this. It scores what a client's update is expected to be worth against what sending it will cost, and the same number then answers all three questions. \Cref{tab:oneutility} shows what each family of decisions is worth on the axis it controls. Replacing utility-driven participation with uniform participation raises the round from 3.52 to 9.21 hours, uniform precision raises the uploaded state from 2.88 to 7.27~MB per round, and removing utility-driven placement raises the round to 6.60 hours.

The design has a cost that is easy to overlook. The scalars the controller needs are the scalars that describe a client's workload, and workloads differ systematically by modality. A logistic-regression attacker with access to nothing but this server-visible telemetry recovers a client's modality with $1.00$ accuracy against a majority baseline of $0.36$. Orchestration of this kind buys time with information, and the information is not free.

\noindent\textbf{Opportunities of one-utility orchestration.}
Pricing three decisions with one number removes the separate thresholds a modality-aware deployment would otherwise tune by hand, and it makes the exposure of the control channel a measurable quantity rather than an assumption.

%% file: sections/system_design.tex
\section{System Design}
\label{sec:design}

\subsection{Overview}

\begin{figure*}[t]
\centering
\includegraphics[draft=false,width=0.95\textwidth]{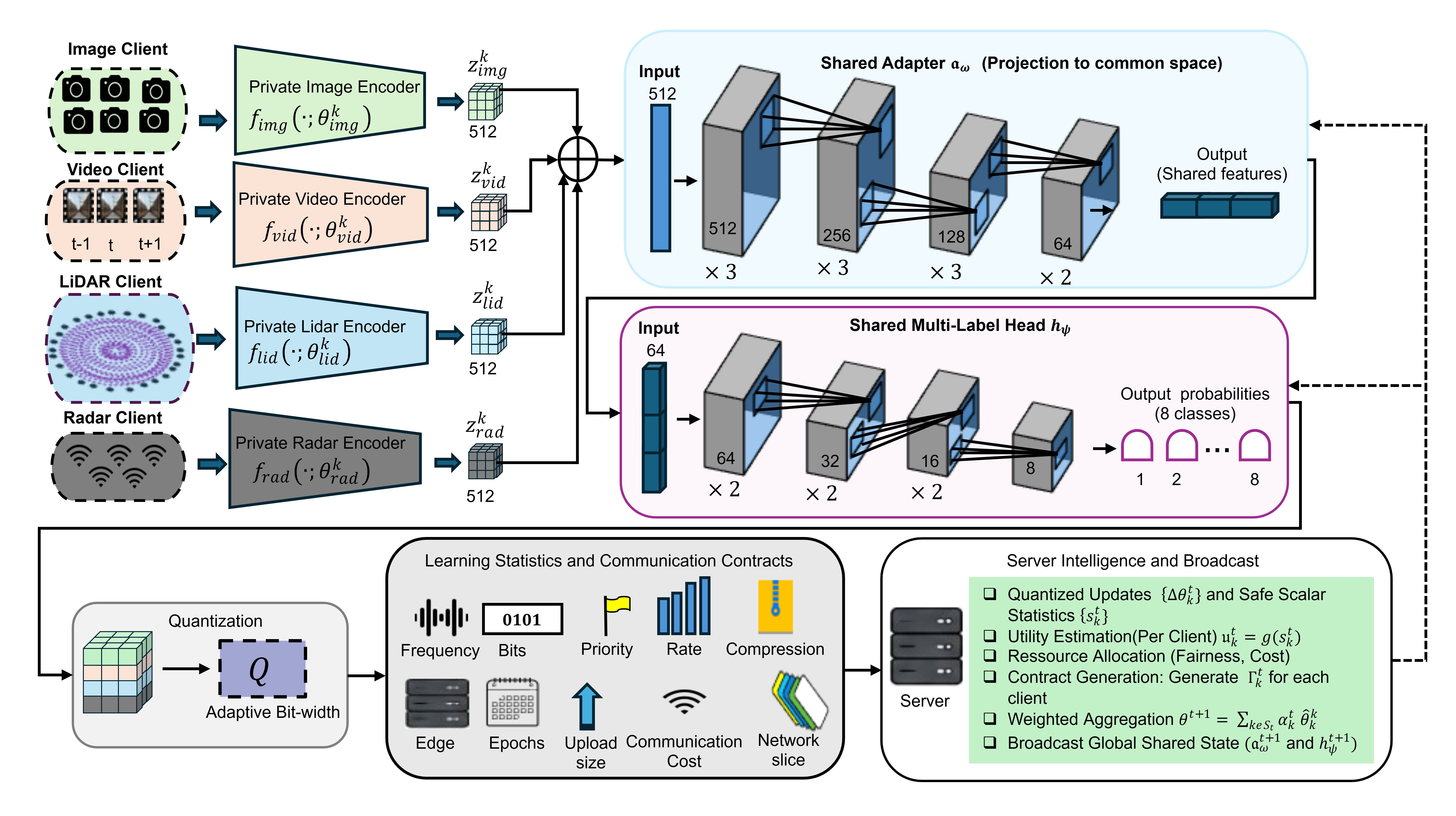}
\caption{Overview of \sys{}. Private encoders stay on the device; one utility over privacy-safe scalars fixes participation, precision, and placement.}
\label{fig:overview}
\end{figure*}

\sys{} is an orchestration layer over federated averaging. It decides only which clients train and transmit in a round and at what cost. Every round runs three phases (\Cref{fig:overview}). In the \emph{profile} phase, each client performs a short local warm-up and reports a few privacy-safe scalars: how far its local loss fell, how large its shared-state update is, what uplink it measured, and how often it has participated. In the \emph{price} phase, \emph{Value-per-Hour Pricing} (VHP) maps those scalars to one learning-aware utility. In the \emph{assign} phase, the server reads that utility three times and issues a \emph{contract} that fixes whether the client uploads, at what precision it transmits, and where its update is aggregated.

The privacy boundary follows the model split. Client $k$ observes modality $m_k$, holds a local dataset $\mathcal{D}_k$, and keeps a private encoder $f_{m_k}(\cdot;\phi_k)$ that never leaves the device. Only the adapter $a(\cdot;\omega)$ and the multi-label head $h(\cdot;\psi)$ are synchronized, so the federated model is $\boldsymbol{\theta}=\{\omega,\psi\}$ and the client prediction is
\begin{equation}
\hat{\mathbf y}_k=h_{\psi}\!\left(a_{\omega}\!\left(f_{m_k}(\mathbf x_k;\phi_k)\right)\right).
\label{eq:client_prediction}
\end{equation}
Raw observations, labels, embeddings, logits, and per-sample outputs stay local. The server receives the quantized update of $\boldsymbol{\theta}$ together with the scalars above, so a client exposes only its $d_s$ shared parameters and not the $d_p$ of its private encoder.

\subsection{The Round as a Ratio}
\label{sec:ratio}

A deployment is judged by the accuracy it reaches per hour and not by the accuracy it reaches per round, so what a round should maximize is a ratio rather than a sum. Let $V(\Gamma^t)$ be the expected worth of the updates a contract set $\Gamma^t$ buys and let $T(\Gamma^t)$ be the active time it consumes, which is the makespan of the selected clients over the edge servers plus their synchronization. The orchestration problem is
\begin{equation}
\begin{aligned}
\max_{\Gamma^t}\quad & \frac{V(\Gamma^t)}{T(\Gamma^t)} \\[2pt]
\text{s.t.}\quad & \sum_{k\in\mathcal{K}} q_k^t b_k^t \le B, \quad
\sum_{k:\,m_k=m} q_k^t \ge 1 \;\; \forall m\in\mathcal{M},
\end{aligned}
\label{eq:ratio}
\end{equation}
together with one edge server per selected client and the discrete range of every contract variable. The second constraint is a coverage floor, and it is what keeps a modality in the round during the stretches when its clients price poorly.

\Cref{eq:ratio} is a fractional program over binary and discrete variables, and neither the ratio nor the makespan inside $T$ separates across clients. Both obstacles have a standard remedy. For the ratio, the parametric transform of Dinkelbach~\cite{Dinkelbach1967MS} replaces $\max V/T$ by the root of
\begin{equation}
\Phi(\kappa)=\max_{\Gamma^t}\big\{V(\Gamma^t)-\kappa\,T(\Gamma^t)\big\},
\label{eq:dinkelbach}
\end{equation}
which is convex and strictly decreasing, so a ratio problem becomes a sequence of problems linear in worth and in time, and $\kappa$ carries the units of value per hour. For the makespan, assigning each selected client to the least-loaded server is list scheduling, whose makespan is within $2-1/|\mathcal{E}|$ of the optimum~\cite{Graham1966BSTJ}, or a factor $1.5$ for the two servers used here, so $T$ may be replaced by the total time the round buys at a bounded loss. What is left couples clients only through the uplink budget, and one multiplier removes it.

\subsection{Value-per-Hour Pricing}
\label{sec:utility}

VHP is the per-client score that the two reductions leave behind.

\begin{proposition}[Separable pricing]
\label{prop:price}
Let $\lambda\ge0$ multiply the uplink budget of \Cref{eq:ratio} and let $\kappa$ be the parameter of \Cref{eq:dinkelbach}. The parametric Lagrangian then decomposes across clients as $\sum_{k}q_k^t U_k^t+\lambda B$ with
\begin{equation}
U_k^t=\alpha P_k^t+\gamma G_k^t+\mu F_k^t-\delta X_k^t-\eta L_k^t ,
\label{eq:utility}
\end{equation}
where $\delta=\lambda$ and $\eta=\kappa$, and the participation that maximizes it is the threshold rule $q_k^{t\star}=\mathbf{1}[U_k^t>0]$ corrected by the coverage floor.
\end{proposition}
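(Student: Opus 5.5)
The plan is to make the two reductions of \Cref{sec:ratio} explicit and then read off the coefficients of the Lagrangian. First, replace the ratio by the parametric objective of \Cref{eq:dinkelbach} at a fixed $\kappa$. Then replace the makespan inside $T(\Gamma^t)$ by the separable surrogate $\sum_k q_k^t L_k^t$. Here $L_k^t$ is the active time client $k$ adds to the round: its warm-up-extrapolated local training and evaluation time, plus its transmission time. Least-loaded placement lets us do this at the bounded factor $2-1/|\mathcal{E}|$ of~\cite{Graham1966BSTJ}.

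On the value side, write $V(\Gamma^t)=\sum_k q_k^t(\alpha P_k^t+\gamma G_k^t+\mu F_k^t)$. The three terms are the reported loss decrease, the staleness/participation gain from the participation count, and the modality-evidence term. $\alpha,\gamma,\mu$ are fixed nonnegative weights. This additivity is a modelling assumption of VHP, namely that the expected worth of a round is the sum of its clients' expected contributions, and I would state it as such rather than derive it. Finally, let $X_k^t=b_k^t$ denote the uplink bytes of client $k$ at the precision it is assigned.

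Next, attach $\lambda\ge 0$ to the uplink constraint and form
\[
\mathcal{L}(\Gamma^t;\kappa,\lambda)=V(\Gamma^t)-\kappa\sum_k q_k^t L_k^t-\lambda\Big(\sum_k q_k^t X_k^t-B\Big).
\]
Collecting the terms multiplied by each $q_k^t$ gives $\sum_k q_k^t U_k^t+\lambda B$, with $U_k^t$ as in \Cref{eq:utility} and the identifications $\delta=\lambda$ and $\eta=\kappa$. This step is pure bookkeeping.

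Because the coverage constraint is kept explicit rather than dualized, the remaining problem is $\max\sum_k q_k^tU_k^t$ over $q^t\in\{0,1\}^{|\mathcal{K}|}$ subject to at least one $q_k^t=1$ in each modality class. Without the floor the objective separates, and coordinatewise maximization gives $q_k^{t\star}=\mathbf{1}[U_k^t>0]$. With the floor, the classes $\{k:m_k=m\}$ are disjoint, so the problem splits by modality. In each class the optimum keeps all positive-utility clients. If that set is empty, it adds the single client of largest (least negative) $U_k^t$. A short exchange argument shows this is optimal within the class: any feasible set contains at least one client, and dropping nonpositive ones or swapping to the argmax never decreases the sum. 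This is the ``corrected by the coverage floor'' clause.

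The main obstacle is the makespan step. Decomposition is only exact for the surrogate $\sum_k q_k^tL_k^t$, not for the true $T(\Gamma^t)$. I would therefore state the proposition for the surrogate and separately invoke the list-scheduling bound, which ties the surrogate to the makespan. Since $\max_e\text{load}_e\le\sum_k q_k^tL_k^t\le|\mathcal{E}|\max_e\text{load}_e$, the surrogate is within a factor $|\mathcal{E}|$ of the true makespan; the stated $2-1/|\mathcal{E}|$ factor then bounds how far greedy placement is from the optimal makespan. A secondary point worth noting, but not proving, is that $X_k^t$ depends on the precision choice. The decomposition holds for any fixed precision, and the precision that maximizes $U_k^t$ is then chosen per client, which the later reverse water-filling rule handles.
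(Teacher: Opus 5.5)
Your proposal is correct and follows the paper's proof. Both collect each selected client's contribution to $V-\kappa T$ under the additive time surrogate, adjoin the uplink budget with $\lambda$ to obtain $-\lambda X_k^t$ and the constant $\lambda B$, and threshold; your only slip is descriptive, since in the paper $G_k^t=\|\Delta\boldsymbol{\theta}_k^t\|_2$ is update magnitude and $F_k^t$ is the participation-based fairness credit, and the bookkeeping does not depend on either. You are more careful than the paper on two points: its proof claims every remaining constraint is a per-client range although the coverage floor couples clients within a modality, so your per-modality exchange argument is what actually justifies the ``corrected by the coverage floor'' clause, and your separation of the factor-$|\mathcal{E}|$ gap between $\sum_k q_k^tL_k^t$ and the makespan from Graham's $2-1/|\mathcal{E}|$ greedy-versus-optimal bound fixes a conflation in the paper's reduction.
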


\begin{IEEEproof}
Collect what one selected client contributes to $V-\kappa T$: its expected worth $\alpha P_k^t+\gamma G_k^t+\mu F_k^t$ and the time $\kappa L_k^t$ it adds to the round. Adjoining the budget with $\lambda$ adds $-\lambda X_k^t$ to the same bracket and the constant $\lambda B$. Every remaining constraint is a per-client range, so the maximization distributes over $k$ and a client raises the sum exactly when its bracket is positive.
\end{IEEEproof}

\Cref{eq:utility} is therefore not a weighted sum chosen by hand. Its two negative weights are prices: $\delta$ is the shadow price of a byte of uplink, and $\eta$ is the worth of an hour on the round's critical path. This is why \Cref{eq:ratio} carries no second penalty on communication or delay, and why one number can arbitrate between a fast client and an informative one. Here $P_k^t$ is normalized local progress, $G_k^t=\|\Delta\boldsymbol{\theta}_k^t\|_2$ is update magnitude, $F_k^t$ credits clients and modalities that have participated less, $X_k^t$ is the uplink cost of the update after compression, and $L_k^t$ is the time the client adds to the round, from local training through synchronization. Every term is scaled to $[0,1]$ within the round, so the weights carry relative importance rather than unit conversion. The implementation holds $\kappa$ at a value calibrated once instead of running a Dinkelbach update every round, which is why the weights appear as constants. \Cref{tab:notation} collects the notation.

\begin{table}[t]
\centering
\caption{Notation.}
\label{tab:notation}
\footnotesize
\setlength{\tabcolsep}{4.5pt}
\renewcommand{\arraystretch}{1.15}
\begin{tabular}{>{\RaggedRight}p{1.6cm}|p{5.2cm}}
\toprule
\textbf{Symbol} & \textbf{Meaning} \\
\midrule
\rowcolor{crystalblue!10}
\multicolumn{2}{l}{\textbf{Clients and model}} \\
$\mathcal{K},\mathcal{M},\mathcal{E}$ & clients, sensing modalities, edge servers \\
\rowcolor{crystalblue!5}
$\mathcal{D}_k,m_k$ & local dataset and modality of client $k$ \\
$\phi_k,\boldsymbol{\theta}$ & private encoder and shared state $\{\omega,\psi\}$ \\
\rowcolor{crystalblue!5}
$d_p,d_s$ & private and shared parameter counts \\
\midrule
\rowcolor{crystalblue!10}
\multicolumn{2}{l}{\textbf{Pricing}} \\
$U_k^t$ & learning-aware utility of client $k$ at round $t$ \\
\rowcolor{crystalblue!5}
$P,G,F$ & local progress, update magnitude, fairness credit \\
$X,L$ & uplink cost and time added to the round \\
\rowcolor{crystalblue!5}
$\alpha,\gamma,\mu,\delta,\eta$ & utility weights, with $\delta=\lambda,\eta=\kappa$ \\
$\kappa,\lambda$ & value of an hour, shadow price of a byte \\
\rowcolor{crystalblue!5}
$\sigma_k^2,D^\star,\rho$ & update variance, distortion floor, loss sensitivity \\
\midrule
\rowcolor{crystalblue!10}
\multicolumn{2}{l}{\textbf{Contract and budget}} \\
$q_k^t,n_k^t$ & upload decision and local epochs \\
\rowcolor{crystalblue!5}
$b_k^t,r_k^t,z_k^t$ & uplink rate, update frequency, quantization level \\
$p_k^t,s_k^t,e_k^t$ & priority, network slice, edge server \\
\rowcolor{crystalblue!5}
$B,\Lambda_e^t$ & uplink budget and load of edge server $e$ \\
$\mathcal{S}_t$ & clients selected at round $t$ \\
\bottomrule
\end{tabular}
\end{table}

\subsection{Reading One Price Three Ways}
\label{sec:contract}

\subsubsection{The Contract}
The contract issued to a selected client is
\begin{equation}
\Gamma_k^t=\big(\underbrace{q_k^t,n_k^t}_{\text{participate}}\;\big|\;\underbrace{b_k^t,r_k^t,z_k^t}_{\text{precision}}\;\big|\;\underbrace{p_k^t,s_k^t,e_k^t}_{\text{place}}\big),
\label{eq:contract}
\end{equation}
and all eight variables follow from $U_k^t$. \emph{Participate} applies the threshold of \Cref{prop:price} and then repairs the coverage floor by admitting the best-priced client of any modality the threshold left empty, and it grants local epochs $n_k^t$ in proportion to utility, so cheap and informative clients do more local work than expensive and redundant ones.

\subsubsection{Precision by Reverse Water-Filling}
Two variables set what a selected client sends. The uplink rate is allocated in proportion to nonnegative utility,
\begin{equation}
b_k^t=\frac{[U_k^t]_+}{\sum_{j\in\mathcal{S}_t}[U_j^t]_+}\,B ,
\label{eq:rate}
\end{equation}
and the number of bits per coordinate follows from the same shadow price that produced $\delta$.

\begin{proposition}[Utility-proportional allocation]
\label{prop:pf}
For $U_k^t>0$ on $\mathcal{S}_t$, rule \Cref{eq:rate} is the unique maximizer of $\sum_{k\in\mathcal{S}_t}U_k^t\log b_k^t$ subject to $\sum_{k\in\mathcal{S}_t}b_k^t\le B$ and $b_k^t>0$, so it is the weighted proportionally fair allocation of the uplink budget.
\end{proposition}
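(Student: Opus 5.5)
The plan is to treat \Cref{prop:pf} as a strictly concave maximization over a convex set and to read the solution off the Karush--Kuhn--Tucker conditions. Write $w_k=U_k^t>0$ for $k\in\mathcal{S}_t$ and $W=\sum_{j\in\mathcal{S}_t}w_j$. The objective $\sum_k w_k\log b_k$ is strictly concave on the open orthant $b_k>0$, because each term is strictly concave in its own coordinate and the sum is separable. The feasible set $\{b>0,\ \sum_k b_k\le B\}$ is convex. So there is at most one maximizer, which settles uniqueness as soon as existence is shown.

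For existence, the objective tends to $-\infty$ as any $b_k\downarrow 0$. We can therefore restrict to a compact set $\{b_k\ge\epsilon,\ \sum_k b_k\le B\}$ for a small $\epsilon>0$ without losing the supremum, and a continuous function attains its maximum there. Next, the budget binds at any maximizer. If $\sum_k b_k<B$, raising any single $b_k$ strictly increases the objective, since $w_k>0$ and $\log$ is increasing.

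With $\sum_k b_k=B$ fixed and the positivity constraints inactive in the interior, Slater's condition holds (take $b_k=B/(2|\mathcal{S}_t|)$), so KKT is necessary and sufficient. Stationarity of $\sum_k w_k\log b_k-\nu(\sum_k b_k-B)$ gives $w_k/b_k=\nu$, that is, $b_k=w_k/\nu$. Substituting into the binding budget gives $\nu=W/B$, hence $b_k=(w_k/W)\,B$. Because every $w_k>0$, $[U_k^t]_+=U_k^t$, so this is exactly rule \Cref{eq:rate}.

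The one step needing care is justifying that the positivity constraints never bind, so that no multipliers for them appear. The $-\infty$ boundary behaviour of $\log$ handles this. The "weighted proportional fairness" reading then follows by definition: an allocation maximizing $\sum_k w_k\log b_k$ is weighted proportionally fair. Equivalently, for any feasible $b'$ we have $\sum_k w_k(b'_k-b_k)/b_k=\nu\sum_k(b'_k-b_k)\le 0$, which is the standard characterization and follows directly from the first-order condition and concavity.
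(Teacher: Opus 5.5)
Your argument is correct and follows the same route as the paper's proof: strict concavity of the weighted log objective, the observation that the budget must bind, and the stationarity condition $U_k^t/b_k^t=\nu$, with $\nu$ fixed by $\sum_k b_k^t=B$. You also supply two things the paper leaves implicit, namely existence via the $-\infty$ boundary behaviour of $\log$ and the first-order characterization of proportional fairness. One small tidy-up: your Slater point $b_k=B/(2|\mathcal{S}_t|)$ applies to the inequality-constrained form rather than the form with $\sum_k b_k=B$ fixed, but with affine constraints KKT necessity holds regardless.
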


\begin{IEEEproof}
The objective is strictly concave and increasing, so the budget binds. Setting the gradient of the Lagrangian to zero gives $U_k^t/b_k^t=\nu$ for every $k$, hence $b_k^t\propto U_k^t$, and $\sum_k b_k^t=B$ fixes $\nu$.
\end{IEEEproof}

Proportional fairness is what keeps a modality alive under a tight budget. A top-$k$ rule spends the budget on the few clients it ranks highest, whereas \Cref{eq:rate} gives every client with positive utility a positive rate.

\begin{proposition}[Reverse water-filling]
\label{prop:water}
Let the quantizer obey the high-resolution law $\mathbb{E}\|\widehat{\Delta\boldsymbol{\theta}}_k^t-\Delta\boldsymbol{\theta}_k^t\|_2^2=d_s\sigma_k^2 2^{-2z_k^t}$, where $\sigma_k^2=(G_k^t)^2/d_s$, and let reconstruction error raise the expected global loss at rate $\rho$. The precision that maximizes the bracket of \Cref{prop:price} is
\begin{equation}
z_k^{t\star}=\tfrac12\log_2\!\frac{\sigma_k^2}{D^\star},\qquad D^\star=\frac{\lambda}{16\rho\ln 2},
\label{eq:water}
\end{equation}
so every client is quantized to one common distortion floor $D^\star$ fixed by the shadow price of the uplink.
\end{proposition}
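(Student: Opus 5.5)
The plan is to isolate, inside the per-client bracket of \Cref{prop:price}, the terms that depend on the precision $z_k^t$, and to maximize that one-dimensional function in closed form. With $\lambda$ and $\kappa$ held fixed, the terms that move with $z_k^t$ are two. The first is the penalty on worth from reconstruction error, $-\rho\, d_s\sigma_k^2 2^{-2z_k^t}$, obtained by multiplying the assumed high-resolution law by the sensitivity $\rho$. The second is the uplink term $-\lambda X_k^t$.

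\textbf{Uplink cost.} I will model the uplink cost of the compressed update as $X_k^t = d_s z_k^t/8$, since $d_s$ coordinates at $z_k^t$ bits each is $d_s z_k^t/8$ bytes and $\lambda$ is the shadow price of a byte. I am assuming here that the local time $L_k^t$ does not depend on $z_k^t$. This is justified by the observation in \Cref{sec:motivation} that transmission is a negligible share of the round.

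\textbf{The scalar problem.} The objective becomes
\begin{equation*}
\max_{z\ge 0}\; -\rho\, d_s\sigma_k^2 2^{-2z} - \lambda\, d_s z/8 ,
\end{equation*}
which is strictly concave in $z$. I first treat $z$ as continuous and take the stationarity condition. Differentiating gives $2\ln 2\,\rho\, d_s \sigma_k^2 2^{-2z} = \lambda d_s/8$. Hence
\begin{equation*}
\sigma_k^2 2^{-2z} = \frac{\lambda}{16\rho\ln 2} = D^\star ,
\end{equation*}
and solving for $z$ yields \Cref{eq:water}. The left-hand side is the per-coordinate distortion $\mathbb{E}\|\widehat{\Delta\boldsymbol{\theta}}_k^t-\Delta\boldsymbol{\theta}_k^t\|_2^2/d_s$. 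So the same stationarity condition shows that every client ends at the common floor $D^\star$, and $D^\star$ depends only on $\lambda$ and $\rho$, not on $k$. Concavity makes this stationary point the unique maximizer.

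\textbf{Boundary and discreteness.} I then handle the constraint $z\ge 0$ in the usual reverse water-filling manner. If $\sigma_k^2\le D^\star$, the derivative is already nonpositive at $z=0$, so the optimum is $z=0$. In that case the client sends no refinement beyond the floor, which gives the familiar $z=\tfrac12[\log_2(\sigma_k^2/D^\star)]_+$ form. Finally, for the discrete range of $z_k^t$, strict concavity implies that the best admissible level is one of the two levels adjacent to the continuous optimum. I would state this as rounding to the nearer of the two by objective value.

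\textbf{Main obstacle.} I expect the hardest step to be justifying the byte accounting and normalization of $X_k^t$, since this is what produces the constant $16$. The utility terms in \Cref{sec:utility} are rescaled to $[0,1]$ within the round. I would handle this by arguing before normalization, that is, in the raw Lagrangian of \Cref{prop:price}, where $\lambda$ is in value per byte. I would then note that a common per-round rescaling of both terms leaves the argmax in $z$ unchanged.
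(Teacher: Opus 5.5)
Your proposal is correct and follows the paper's proof: the payload $X_k^t=d_s z/8$ bytes, the precision-dependent part of the bracket $-\rho d_s\sigma_k^2 2^{-2z}-\lambda d_s z/8$, its strict concavity, and the stationarity condition $\sigma_k^2 2^{-2z}=\lambda/(16\rho\ln 2)=D^\star$ are exactly the paper's steps. Your clipping at $z\ge 0$ and your two-adjacent-levels argument for the discrete range go beyond what the paper proves; the clipping shows that updates with $\sigma_k^2\le D^\star$ end at per-coordinate distortion $\min(\sigma_k^2,D^\star)$ rather than exactly at $D^\star$, so the common-floor claim holds only for clients whose updates exceed the floor.
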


\begin{IEEEproof}
At $z$ bits per coordinate the payload is $X_k^t=d_s z/8$ bytes, so the bracket contributes $-\rho d_s\sigma_k^2 2^{-2z}-\lambda d_s z/8$, whose second derivative is negative for $\sigma_k^2>0$. Setting the first derivative to zero gives $2^{-2z}=\lambda/(16\rho\sigma_k^2\ln 2)$, which rearranges to \Cref{eq:water}.
\end{IEEEproof}

Bits are therefore not spent evenly: a client whose update is large enough to move the shared state buys precision, and one whose update is small is coarsened until its distortion reaches the same floor as everyone else's. Because $\sigma_k^2$ grows with $G_k^t$, which enters \Cref{eq:utility} positively, $z_k^{t\star}$ increases in the price a client already earned, so the number that admitted the client also sets its precision. The implementation projects $z_k^{t\star}$ onto the levels it supports and sets $r_k^t$ by the same monotone map.

\subsubsection{Place and Aggregate}
\emph{Place} derives the priority $p_k^t$ from $U_k^t$, the slice $s_k^t$ from $p_k^t$, and the edge server as the least-loaded feasible one, $e_k^t=\arg\min_{e\in\mathcal{E}}\Lambda_e^t$, which is the list-scheduling rule whose bounded makespan let the round's time be treated as separable. A selected client transmits $\widetilde{\Delta\boldsymbol{\theta}}_k^t=Q(\Delta\boldsymbol{\theta}_k^t,z_k^t)$, and the server aggregates the reconstructions by local data size,
\begin{equation}
\boldsymbol{\theta}^{t+1}=\sum_{k\in\mathcal{S}_t}\frac{|\mathcal{D}_k|}{\sum_{j\in\mathcal{S}_t}|\mathcal{D}_j|}\Big(\boldsymbol{\theta}^{t}+\widehat{\Delta\boldsymbol{\theta}}_k^t\Big).
\label{eq:aggregation}
\end{equation}

\begin{algorithm}[t]
\caption{\sys{} Round
(\colorbox{phaseI}{Profile}, \colorbox{phaseII}{Price}, and
\colorbox{phaseIII}{Assign})}
\label{alg:round}
\textbf{Input:} shared state $\boldsymbol{\theta}^{t}$, clients $\mathcal{K}$,
edge servers $\mathcal{E}$, uplink budget $B$
\begin{algorithmic}[1]
\STATE \textit{Phase I: Profile} \Comment{privacy-safe scalars}
\colorbox{phaseI}{
\parbox{0.82\columnwidth}{
\STATE broadcast $\boldsymbol{\theta}^{t}$ and run one local warm-up per client
\STATE collect $P_k^t,G_k^t,F_k^t,X_k^t,L_k^t$ and normalize them to $[0,1]$
}}

\STATE \textit{Phase II: Price} \Comment{one utility per client}
\colorbox{phaseII}{
\parbox{0.82\columnwidth}{
\STATE $U_k^t\gets\alpha P_k^t+\gamma G_k^t+\mu F_k^t-\delta X_k^t-\eta L_k^t$
by \eqref{eq:utility}
\STATE $\mathcal{S}_t\gets\{k:U_k^t>0\}$, then repair the coverage floor
of \eqref{eq:ratio}
}}

\STATE \textit{Phase III: Assign} \Comment{participate, precision, place}
\colorbox{phaseIII}{
\parbox{0.82\columnwidth}{
\FORALL{$k\in\mathcal{S}_t$}
\STATE $b_k^t\gets$ \eqref{eq:rate}; set $n_k^t,r_k^t,z_k^t,p_k^t,s_k^t$ by the
monotone maps of $U_k^t$
\STATE $e_k^t\gets\arg\min_{e\in\mathcal{E}}\Lambda_e^t$; run $n_k^t$ local
epochs; upload $Q(\Delta\boldsymbol{\theta}_k^t,z_k^t)$
\ENDFOR
\STATE aggregate the received states by \eqref{eq:aggregation}
}}
\RETURN $\boldsymbol{\theta}^{t+1}$
\end{algorithmic}
\end{algorithm}

\subsubsection{Cost of Orchestration}
Pricing is $O(|\mathcal{K}|)$ over scalars and assignment adds an $O(|\mathcal{K}|\log|\mathcal{K}|)$ sort. Quantization is $O(d_s)$ per selected client and aggregation is $O(|\mathcal{S}_t|d_s)$. Since $d_p\gg d_s$ and local optimization dominates the round, orchestration is not on the critical path: the measured communication latency of a round is 1.99 seconds against a round duration of 3.52 hours.

%% file: sections/evaluation.tex
\section{Evaluation}
\label{sec:eval}

\subsection{Experimental Setup}

\emph{Dataset and clients.}
We evaluate on the complete nuScenes \texttt{v1.0-trainval} split~\cite{Caesar2020CVPR}: 850 scenes and 34,149 synchronized samples, partitioned by scene into 28,130 training and 6,019 validation samples. Eight object categories are mapped to multi-hot scene labels, giving a multi-label recognition task over 1,094,848 annotations. Fifteen clients cover the full sensing platform: six cameras, two video streams, two partitions of the top LiDAR, and five radars. Image and video clients use ImageNet-pretrained ResNet-18 backbones, video adds a gated recurrent unit for short-term context, and LiDAR and radar clients encode bird's-eye-view representations with lightweight convolutional networks. All clients optimize an asymmetric multi-label objective with positive-class weighting using AdamW and cosine decay. The network holds two edge servers sharing a 48\,Mbps uplink budget.

\emph{Baselines.}
We compare against classical and multimodal federated learning (FedAvg~\cite{McMahan2017AISTATS}, FedProx~\cite{Li2020MLSYS}, FedCoLa~\cite{Sun2024ECCV}), heterogeneity-aware optimization (FedDyn~\cite{Acar2021ICLR}, FedADMM~\cite{Gong2022ICDE}, FedAdam~\cite{Reddi2021ICLR}), personalized and regularized methods (FedBABU~\cite{Oh2022ICLR}, FedSR~\cite{Nguyen2022NeurIPS}, MOON~\cite{Li2021CVPR}), and communication-aware methods (FedCS~\cite{Nishio2019ICC}, FedPAQ~\cite{Reisizadeh2020AISTATS}, Oort~\cite{Lai2021OSDI}). Every method uses the same split, preprocessing, loss, client organization, local budget, and validation schedule, so the reported differences come from the algorithms.

\emph{Metrics and protocol.}
The active time of a round is the training and evaluation time of its selected clients. It excludes the idle intervals between rounds and matches the recorded round wall time to within $0.01\%$. We report accuracy within a budget, meaning the highest Macro-F1 a method reaches inside a given cumulative active time, in place of the time at which it first crosses a fixed accuracy. Both quantities read the same trajectory, but the Macro-F1 curve is nearly flat near $0.66$, so the crossing time varies by 24.5 hours across seeds while accuracy at thirty hours varies by $0.6$ points. Modality coverage is the mean number of the four sensing groups present in a round. \sys{} is reported as mean and standard deviation over five seeds. The baselines are single runs of unequal length, so their peak values appear next to the round count that produced them.

\input{sections/tables/main}

\subsection{Cost and Accuracy Under a Budget}

\subsubsection{A Round Costs a Factor of Three}
\sys{} spends $3.31\pm0.13$ active hours per round against $4.85$ for the cheapest baseline and $9.78$ for the dearest, a reduction of $1.5\times$ to $3.0\times$ (\Cref{tab:main}). The spread across five seeds is $4\%$ of the mean, so cheaper rounds are a property of the contract rather than of a favorable run. Nor are they obtained by dropping expensive sensors. \sys{} keeps all four modality groups in every round of every seed. FedCS is the one baseline that also holds four, and it pays $9.50$\,MB per round to do so against $2.93$\,MB for \sys{}; the rest average $2.56$ to $3.19$ groups and leave roughly a quarter of them idle.

\subsubsection{The Lead Holds From Ten to Forty-Five Hours}
Within twenty active hours the weakest of the five seeds reaches Macro-F1 $0.6486$, above the $0.6458$ of the strongest baseline at that budget, so the lead at twenty hours does not depend on the seed. Within thirty hours the seed mean is $0.6559\pm0.0057$ against $0.6486$, and the weakest seed matches that baseline exactly. Sweeping the budget continuously, the seed mean leads every baseline from ten hours to $42.5$ hours (\Cref{fig:eval}(c)). The mechanism is visible in the round counts rather than in the optimizer: at thirty hours FedBABU has finished three rounds and \sys{} nine. Cheaper rounds convert into more optimization per hour, not into better optimization per round.

\begin{figure*}[t]
\centering
\includegraphics[width=0.99\textwidth]{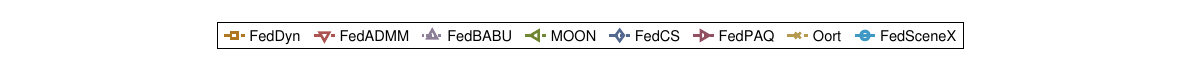}\\[-0.2em]
\begin{subfigure}[t]{0.245\textwidth}
\centering\includegraphics[width=\linewidth]{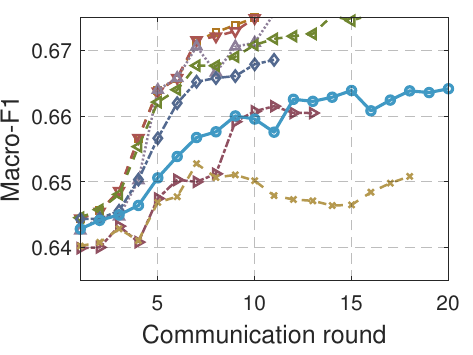}
\caption{By round}\label{fig:eval_a}
\end{subfigure}\hfill
\begin{subfigure}[t]{0.245\textwidth}
\centering\includegraphics[width=\linewidth]{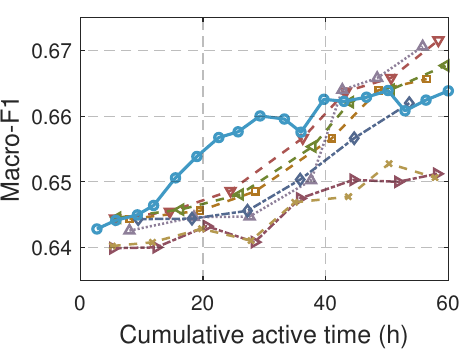}
\caption{By active time}\label{fig:eval_b}
\end{subfigure}\hfill
\begin{subfigure}[t]{0.245\textwidth}
\centering\includegraphics[width=\linewidth]{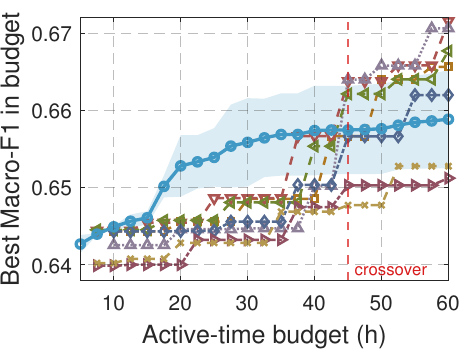}
\caption{Under a budget}\label{fig:eval_c}
\end{subfigure}\hfill
\begin{subfigure}[t]{0.245\textwidth}
\centering\includegraphics[width=\linewidth]{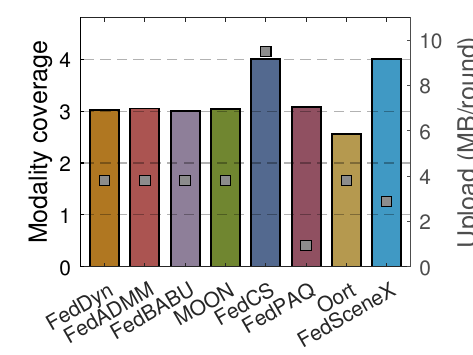}
\caption{Coverage and upload}\label{fig:eval_d}
\end{subfigure}
\caption{The same trajectories read four ways. (c) shades one standard deviation over five seeds and marks the crossover; (d) plots modality coverage as bars and upload volume as squares on the right axis.}
\label{fig:eval}
\end{figure*}

\subsection{Where the Advantage Ends}

Beyond forty-five hours the ranking reverses. At fifty hours FedBABU reaches $0.6658$ and FedDyn $0.6640$, while \sys{} is at $0.6576\pm0.0058$, and over their full trajectories the same two methods peak at $0.6798$ and $0.6819$ against $0.6713$. The crossover follows from the design. Selective transmission and reduced update precision are what make a round inexpensive, and the same coarseness limits the fine-grained refinement that late-stage optimization needs. \sys{} is therefore the right choice when a deployment has to reach a usable model under a bounded time budget, and the wrong one when final accuracy matters more than the hours spent reaching it.

\input{sections/tables/ablation}

\subsection{Which Parts of the Contract Matter}

\input{sections/tables/telemetry}

Each family of contract decisions earns its place, and each one on a different axis (\Cref{tab:ablation}). Removing the contract costs $1.54$ Macro-F1 points at thirty hours and raises the cost of a round from $3.52$ to $9.21$ hours. Replacing rate allocation, quantization, or priority with uniform policies costs the same $1.5$ points and raises the uploaded state from $2.88$ to $7.27$\,MB, because every selected client then transmits at full precision. Fairness and edge assignment are cheaper to lose, $0.8$ points each, and they act on timing rather than volume: both variants upload the same $2.91$\,MB as the full contract. Randomizing the contract while keeping the number of selected clients costs $1.12$ points and drops modality coverage from $4.00$ to $3.20$, which isolates the utility ordering rather than the client count as the active ingredient.

Removing a sensor is the one change that raises peak accuracy. Without radar the peak reaches $0.6786$ against $0.6713$, and the trace explains why the utility struggles to price it: radar produces updates nearly as large as video, with a mean norm of $0.75$ against $0.77$, yet the least learning progress of any modality, $0.0009$ against $0.0026$ for LiDAR, and the lowest client-level mAP, $0.57$ against $0.71$ for the cameras. The utility therefore pays for updates that move the shared state without improving it. Dropping radar also gives up a quarter of the sensing problem, and the resulting model still trails the full contract at thirty hours.

\subsection{The Control Channel Is Not Free}

The contract is built from per-client scalars, and those scalars describe a workload. Workloads differ systematically by modality, so the channel that makes orchestration possible also carries modality identity. We train two attackers on the server-visible telemetry alone, using rounds 1 to 60 and testing on rounds 61 to 88, a chronological split that keeps adjacent observations out of both sets. A logistic-regression attacker recovers the modality of an uploading client with accuracy $1.00$ and a random forest with $0.97$, against a majority-class baseline of $0.36$ (\Cref{tab:telemetry}).

This is not a reconstruction of sensor data. Raw observations, labels, embeddings, logits, and private encoder parameters never leave the client, and the shared state is $1.4\%$ of an image client's parameters and $24.2\%$ of a radar client's. What the telemetry exposes is a coarse client property, and it is exposed by construction: an orchestrator that prices clients by their workload must read quantities that identify the workload. Telemetry clipping, temporal aggregation, and secure aggregation of the scalars each reduce the exposure at some cost in contract quality. We report the leak rather than assume it away, since any utility-driven orchestrator inherits it.

%% file: sections/tables/main.tex
\begin{table*}[t]
\centering
\caption{Baseline comparison. \sys{}: mean$\pm$std over five seeds; other rows are single runs of the stated length. \colorbox{DeltaBg}{$\Delta$}: Macro-F1 points \sys{} gains at 30 hours.}
\label{tab:main}
\footnotesize
\setlength{\tabcolsep}{4pt}
\renewcommand{\arraystretch}{1.15}
\begin{tabular}{l|c|c|ccc|ccc|c}
\toprule
& & & \multicolumn{3}{c|}{\cellcolor{SearchBg}\textbf{Cost per round}} & \multicolumn{3}{c|}{\cellcolor{AgenticBg}\textbf{Macro-F1 within budget}} & \cellcolor{DeltaBg}\textbf{$\Delta$} \\
\cmidrule(lr){4-6}\cmidrule(lr){7-9}
\textbf{Method} & \textbf{Rnds} & \textbf{Peak Ma-F1} & \textbf{h}$\,\downarrow$ & \textbf{MB}$\,\downarrow$ & \textbf{Mod.}$\,\uparrow$ & \textbf{20\,h} & \textbf{30\,h} & \textbf{50\,h} & \\
\midrule
FedAvg~\cite{McMahan2017AISTATS}~\venue{AISTATS'17} & 100 & 0.5632 & 4.85 & 22.10 & 3.19 & 0.4068 & 0.4068 & 0.4136 & \mygreen{100}\gain{24.90} \\
\rowcolor{crystalblue!5}
FedProx~\cite{Li2020MLSYS}~\venue{MLSys'20} & 100 & 0.5605 & 4.94 & 22.10 & 3.19 & 0.3854 & 0.3854 & 0.3854 & \mygreen{100}\gain{27.05} \\
FedCoLa~\cite{Sun2024ECCV}~\venue{ECCV'24} & 52 & 0.4199 & 5.68 & -- & 3.19 & 0.4199 & \underline{0.4199} & 0.4199 & \mygreen{100}\gain{23.60} \\
\midrule
\rowcolor{crystalblue!5}
FedDyn~\cite{Acar2021ICLR}~\venue{ICLR'21} & 34 & 0.6819 & 9.78 & 3.80 & 3.03 & 0.6456 & 0.6486 & 0.6640 & \mygreen{15}\gain{0.73} \\
FedADMM~\cite{Gong2022ICDE}~\venue{ICDE'22} & 40 & 0.6807 & 9.26 & 3.80 & 3.05 & 0.6455 & \underline{0.6486} & 0.6638 & \mygreen{14}\gain{0.72} \\
\rowcolor{crystalblue!5}
FedAdam~\cite{Reddi2021ICLR}~\venue{ICLR'21} & 42 & 0.6816 & 9.19 & 3.80 & 3.07 & 0.6444 & 0.6444 & 0.6444 & \mygreen{23}\gain{1.14} \\
\midrule
FedBABU~\cite{Oh2022ICLR}~\venue{ICLR'22} & 31 & 0.6798 & 9.75 & 3.80 & 3.00 & 0.6446 & 0.6447 & 0.6658 & \mygreen{22}\gain{1.11} \\
\rowcolor{crystalblue!5}
FedSR~\cite{Nguyen2022NeurIPS}~\venue{NeurIPS'22} & 57 & 0.6796 & 5.72 & 3.80 & 3.07 & 0.6452 & 0.6452 & 0.6630 & \mygreen{21}\gain{1.06} \\
MOON~\cite{Li2021CVPR}~\venue{CVPR'21} & 65 & 0.6790 & 6.20 & 3.80 & 3.05 & 0.6458 & \underline{0.6481} & 0.6621 & \mygreen{16}\gain{0.78} \\
\midrule
\rowcolor{crystalblue!5}
FedCS~\cite{Nishio2019ICC}~\venue{ICC'19} & 11 & 0.6686 & 8.57 & 9.50 & 4.00 & 0.6444 & \underline{0.6456} & 0.6566 & \mygreen{21}\gain{1.03} \\
FedPAQ~\cite{Reisizadeh2020AISTATS}~\venue{AISTATS'20} & 13 & 0.6615 & 7.09 & 0.95 & 3.08 & 0.6400 & 0.6433 & 0.6503 & \mygreen{25}\gain{1.26} \\
\rowcolor{crystalblue!5}
Oort~\cite{Lai2021OSDI}~\venue{OSDI'21} & 18 & 0.6528 & 5.16 & 3.80 & 2.56 & 0.6429 & 0.6429 & 0.6477 & \mygreen{26}\gain{1.30} \\
\midrule
\rowcolor{crystalblue!10}
\textbf{\sys{}} & 88 & \textbf{0.6713} & \best{3.31}\std{0.13} & \textbf{2.93}\std{0.03} & \best{4.00} & \best{0.6528}\std{0.0040} & \best{0.6559}\std{0.0057} & 0.6576\std{0.0058} & -- \\
\bottomrule
\end{tabular}
\end{table*}

%% file: sections/tables/ablation.tex
\begin{table}[t]
\centering
\caption{Ablation, single runs of one seed. \colorbox{DeltaBg}{$\Delta$}: Macro-F1 points the full contract adds at 30 hours.}
\label{tab:ablation}
\footnotesize
\setlength{\tabcolsep}{4pt}
\renewcommand{\arraystretch}{1.15}
\begin{tabular}{l|ccc|c|c}
\toprule
\textbf{Variant} & \textbf{h/rnd}$\,\downarrow$ & \textbf{MB}$\,\downarrow$ & \textbf{Mod.}$\,\uparrow$ & \textbf{Ma@30\,h} & \textbf{$\Delta$} \\
\midrule
\rowcolor{crystalblue!18}
\sys{} (full) & \best{3.52} & \best{2.88} & \best{4.00} & \best{0.6600} & -- \\
\midrule
w/o contract & 9.21 & 9.50 & 4.00 & 0.6447 & \mygreen{31}\gain{1.54} \\
\rowcolor{crystalblue!5}
w/o fairness & 6.55 & 2.91 & 4.00 & 0.6518 & \mygreen{16}\gain{0.82} \\
Uniform rate & 10.46 & 7.27 & 4.00 & 0.6447 & \mygreen{31}\gain{1.54} \\
\rowcolor{crystalblue!5}
Uniform quantization & 10.30 & 7.27 & 4.00 & 0.6446 & \mygreen{31}\gain{1.54} \\
w/o priority and slice & 10.46 & 7.27 & 4.00 & 0.6446 & \mygreen{31}\gain{1.54} \\
\rowcolor{crystalblue!5}
w/o edge assignment & 6.60 & 2.91 & 4.00 & 0.6521 & \mygreen{16}\gain{0.79} \\
Random contract & 6.09 & 3.80 & 3.20 & 0.6488 & \mygreen{22}\gain{1.12} \\
\rowcolor{crystalblue!5}
w/o video & 4.17 & 2.91 & 3.00 & 0.6475 & \mygreen{25}\gain{1.25} \\
w/o LiDAR & 5.25 & 2.91 & 3.00 & 0.6503 & \mygreen{19}\gain{0.97} \\
\rowcolor{crystalblue!5}
w/o radar & 6.82 & 2.91 & 3.00 & 0.6504 & \mygreen{19}\gain{0.97} \\
\bottomrule
\end{tabular}
\end{table}

%% file: sections/tables/telemetry.tex
\begin{table}[t]
\centering
\caption{Modality inference from server-visible telemetry alone.}
\label{tab:telemetry}
\footnotesize
\setlength{\tabcolsep}{5pt}
\renewcommand{\arraystretch}{1.15}
\begin{tabular}{l|ccc}
\toprule
\textbf{Attacker} & \textbf{Accuracy} & \textbf{Macro-F1} & \textbf{Balanced acc.} \\
\midrule
Majority class & 0.3571 & 0.1316 & 0.2500 \\
\rowcolor{crystalblue!5}
Random forest & 0.9702 & 0.9706 & 0.9760 \\
Logistic regression & \best{1.0000} & \best{1.0000} & \best{1.0000} \\
\bottomrule
\end{tabular}
\end{table}

%% file: sections/conclusion.tex
\section{Conclusion}
\label{sec:conclusion}
In this paper, we have proposed \sys{}, an orchestration layer that composes each round of a same-scene multimodal federation by what the round will cost. Value-per-Hour Pricing turns the round's value-per-hour ratio into one closed-form price per client, whose negative weights are the shadow price of a byte and the worth of an hour, and the same price decides whether the client uploads, at what precision, and on which edge server. Because participation, precision, and placement ask one question, that utility replaces the separate selection, compression, and placement thresholds an edge deployment would otherwise tune by hand. On the complete nuScenes benchmark with fifteen heterogeneous clients, a \sys{} round costs $3.31$ active hours against $4.85$ to $9.78$ for twelve baselines, every seed leads the strongest baseline within a twenty-hour budget, and all four sensing modalities stay in every round. What \sys{} buys is not a better model but an earlier one, and we report the hour at which that stops being the better trade.


\balance

%% file: conference.bbl
\begin{thebibliography}{10}
\providecommand{\url}[1]{#1}
\csname url@samestyle\endcsname
\providecommand{\newblock}{\relax}
\providecommand{\bibinfo}[2]{#2}
\providecommand{\BIBentrySTDinterwordspacing}{\spaceskip=0pt\relax}
\providecommand{\BIBentryALTinterwordstretchfactor}{4}
\providecommand{\BIBentryALTinterwordspacing}{\spaceskip=\fontdimen2\font plus
\BIBentryALTinterwordstretchfactor\fontdimen3\font minus
  \fontdimen4\font\relax}
\providecommand{\BIBforeignlanguage}[2]{{%
\expandafter\ifx\csname l@#1\endcsname\relax
\typeout{** WARNING: IEEEtran.bst: No hyphenation pattern has been}%
\typeout{** loaded for the language `#1'. Using the pattern for}%
\typeout{** the default language instead.}%
\else
\language=\csname l@#1\endcsname
\fi
#2}}
\providecommand{\BIBdecl}{\relax}
\BIBdecl

\bibitem{McMahan2017AISTATS}
H.~B. McMahan, E.~Moore, D.~Ramage, S.~Hampson, and B.~Ag{\"u}era~y Arcas,
  ``{Communication-Efficient Learning of Deep Networks from Decentralized
  Data},'' in \emph{Proc. AISTATS}, 2017.

\bibitem{Kairouz2021FTML}
P.~Kairouz \emph{et~al.}, ``{Advances and Open Problems in Federated
  Learning},'' \emph{Foundations and Trends in Machine Learning}, vol.~14, no.
  1--2, pp. 1--210, 2021.

\bibitem{Duan2023COMST}
Q.~Duan, J.~Huang, S.~Hu, R.~Deng, Z.~Lu, and S.~Yu, ``{Combining Federated
  Learning and Edge Computing Toward Ubiquitous Intelligence in 6G Network:
  Challenges, Recent Advances, and Future Directions},'' \emph{IEEE
  Communications Surveys \& Tutorials}, vol.~25, no.~4, pp. 2892--2950, 2023.

\bibitem{Wu2025WASA}
B.~Wu, J.~Huang, and Q.~Duan, ``{FedTD3: An Accelerated Learning Approach for
  UAV Trajectory Planning},'' in \emph{Proc. WASA}, 2025, pp. 13--24.

\bibitem{Wu2026ARXIVForget}
B.~Wu, Z.~Ding, J.~Huang, and Y.~Zhao, ``{Forget to Improve: On-Device
  LLM-Agent Continual Learning via Budget-Curated Memory},'' arXiv preprint
  arXiv:2606.25115, 2026.

\bibitem{Wu2026ARXIVCrystalMem}
B.~Wu and J.~Huang, ``{CrystalMem: Elastic Memory for Self-Evolving LLM Agents
  via Knowledge Crystallization},'' arXiv preprint arXiv:2608.00303, 2026.

\bibitem{Feng2023ARXIV}
T.~Feng, D.~Bose, T.~Zhang, R.~Hebbar, A.~Ramakrishna, R.~Gupta, M.~Zhang,
  S.~Avestimehr, and S.~Narayanan, ``{FedMultimodal: A Benchmark for Multimodal
  Federated Learning},'' arXiv preprint arXiv:2306.09486, 2023.

\bibitem{Sun2024ECCV}
G.~Sun, Y.~Cong, J.~Dong, Q.~Wang, and J.~Liu, ``{Towards Multi-Modal
  Transformers in Federated Learning},'' in \emph{Proc. ECCV}, 2024.

\bibitem{Fang2025JSAC}
Z.~Fang, J.~Wang, Y.~Ma, Y.~Tao, Y.~Deng, X.~Chen, and Y.~Fang, ``{R-ACP:
  Real-Time Adaptive Collaborative Perception Leveraging Robust Task-Oriented
  Communications},'' \emph{IEEE Journal on Selected Areas in Communications},
  2025.

\bibitem{Fang2025ARXIV}
Z.~Fang, Y.~Guo, J.~Wang, Y.~Zhang, H.~An, Y.~Wang, and Y.~Fang, ``{Shared
  Spatial Memory Through Predictive Coding},'' arXiv preprint arXiv:2511.04235,
  2025.

\bibitem{Tchalla2026ACR}
D.~Y. Tchalla, ``{ST-Hybrid: Dynamic Graph Learning with Multi-Scale
  Spatio-Temporal Attention for Traffic Forecasting},'' \emph{ACM SIGAPP
  Applied Computing Review}, vol.~25, no.~4, pp. 35--52, 2026.

\bibitem{Wu2026ARXIVPRISM}
B.~Wu, Z.~Ding, and J.~Huang, ``{PRISM: Exposing and Resolving Spurious
  Isolation in Federated Multimodal Continual Learning},'' arXiv preprint
  arXiv:2605.01061, 2026.

\bibitem{Acar2021ICLR}
D.~A.~E. Acar, Y.~Zhao, R.~Matas, M.~Mattina, P.~Whatmough, and V.~Saligrama,
  ``{Federated Learning Based on Dynamic Regularization},'' in \emph{Proc.
  ICLR}, 2021.

\bibitem{Gong2022ICDE}
Y.~Gong, Y.~Li, and N.~M. Freris, ``{FedADMM: A Robust Federated Deep Learning
  Framework with Adaptivity to System Heterogeneity},'' in \emph{Proc. IEEE
  ICDE}, 2022, pp. 2575--2587.

\bibitem{Reddi2021ICLR}
S.~J. Reddi, Z.~Charles, M.~Zaheer, Z.~Garrett, K.~Rush, J.~Kone\v{c}n\'{y},
  S.~Kumar, and H.~B. McMahan, ``{Adaptive Federated Optimization},'' in
  \emph{Proc. ICLR}, 2021.

\bibitem{Oh2022ICLR}
J.~Oh, S.~Kim, and S.-Y. Yun, ``{FedBABU: Toward Enhanced Representation for
  Federated Image Classification},'' in \emph{Proc. ICLR}, 2022.

\bibitem{Li2021CVPR}
Q.~Li, B.~He, and D.~Song, ``{Model-Contrastive Federated Learning},'' in
  \emph{Proc. IEEE/CVF CVPR}, 2021, pp. 10\,713--10\,722.

\bibitem{Ding2025IPCCC}
Z.~Ding, J.~Huang, Q.~Duan, C.~Zhang, Y.~Zhao, and S.~Gu, ``{A Dual-Level
  Game-Theoretic Approach for Collaborative Learning in UAV-Assisted
  Heterogeneous Vehicle Networks},'' in \emph{Proc. IEEE IPCCC}, 2025, pp.
  1--8.

\bibitem{Ding2026TAAS}
Z.~Ding, J.~Huang, Y.~Zhao, and Z.~Cai, ``{Combating Knowledge Diversity and
  Catastrophic Forgetting in UAV-Assisted Collaborative Vehicular Learning: A
  Game-Theoretic Approach},'' \emph{ACM Transactions on Autonomous and Adaptive
  Systems}, 2026.

\bibitem{Ding2026ICNC}
Z.~Ding, J.~Huang, and J.~Qi, ``{Learning to Defend: A Multi-Agent
  Reinforcement Learning Framework for Stackelberg Security Game in Mobile Edge
  Computing},'' in \emph{Proc. IEEE ICNC}, 2026.

\bibitem{Wu2023MPE}
B.~Wu and W.~Wu, ``{Model-Free Cooperative Optimal Output Regulation for Linear
  Discrete-Time Multi-Agent Systems Using Reinforcement Learning},''
  \emph{Mathematical Problems in Engineering}, vol. 2023, no.~1, p. 6350647,
  2023.

\bibitem{Reisizadeh2020AISTATS}
A.~Reisizadeh, A.~Mokhtari, H.~Hassani, A.~Jadbabaie, and R.~Pedarsani,
  ``{FedPAQ: A Communication-Efficient Federated Learning Method with Periodic
  Averaging and Quantization},'' in \emph{Proc. AISTATS}, 2020.

\bibitem{Konecny2016ARXIV}
J.~Kone\v{c}n\'{y}, H.~B. McMahan, F.~X. Yu, P.~Richt\'{a}rik, A.~T. Suresh,
  and D.~Bacon, ``{Federated Learning: Strategies for Improving Communication
  Efficiency},'' arXiv preprint arXiv:1610.05492, 2016.

\bibitem{Nishio2019ICC}
T.~Nishio and R.~Yonetani, ``{Client Selection for Federated Learning with
  Heterogeneous Resources in Mobile Edge},'' in \emph{Proc. IEEE ICC}, 2019.

\bibitem{Wu2025ToN}
B.~Wu, J.~Huang, Q.~Duan, L.~Dong, and Z.~Cai, ``{Enhancing Vehicular
  Platooning With Wireless Federated Learning: A Resource-Aware Control
  Framework},'' \emph{IEEE/ACM Transactions on Networking}, 2025.

\bibitem{Huang2025TMC}
J.~Huang, B.~Wu, Q.~Duan, L.~Dong, and S.~Yu, ``{A Fast UAV Trajectory Planning
  Framework in RIS-Assisted Communication Systems With Accelerated Learning via
  Multithreading and Federating},'' \emph{IEEE Transactions on Mobile
  Computing}, pp. 1--16, 2025.

\bibitem{Wu2026COMST}
B.~Wu, J.~Huang, and S.~Yu, ``{`X of Information' Continuum: A Survey on
  AI-Driven Multi-Dimensional Metrics for Next-Generation Networked Systems},''
  \emph{IEEE Communications Surveys \& Tutorials}, vol.~28, pp. 5307--5344,
  2026.

\bibitem{Wu2023ACCESS}
B.~Wu, Z.~Cai, W.~Wu, and X.~Yin, ``{AoI-Aware Resource Management for Smart
  Health via Deep Reinforcement Learning},'' \emph{IEEE Access}, 2023.

\bibitem{Dong2026TCCN}
L.~Dong, J.~Huang, and R.~W. Heath, ``{Transformer-Based Dynamic Resource
  Allocation for Multi-Carrier NOMA Systems},'' \emph{IEEE Transactions on
  Cognitive Communications and Networking}, vol.~12, pp. 4926--4941, 2026.

\bibitem{Pudasaini2026HPSR}
U.~Pudasaini, Z.~Ding, and J.~Huang, ``{Securing Smart Agriculture with
  Communication-Efficient Federated Unlearning},'' in \emph{Proc. IEEE HPSR},
  2026, pp. 1--8.

\bibitem{Wu2025RACS}
B.~Wu, Z.~Ding, L.~Ostigaard, and J.~Huang, ``{Reinforcement Learning-Based
  Energy-Aware Coverage Path Planning for Precision Agriculture},'' in
  \emph{Proc. ACM RACS}, 2025, pp. 1--8.

\bibitem{Xing2026ACR}
C.-C. Xing, Z.~Ding, and J.~Huang, ``{A Stochastic Geometry-Based Analysis of
  SWIPT-Assisted Underlaid Device-to-Device Energy Harvesting},'' \emph{ACM
  SIGAPP Applied Computing Review}, vol.~25, no.~4, pp. 18--34, 2026.

\bibitem{Pan2023SCIS}
D.~Pan, B.-N. Wu, Y.-L. Sun, and Y.-P. Xu, ``{A Fault-Tolerant and
  Energy-Efficient Design of a Network Switch Based on a Quantum-Based
  Nano-Communication Technique},'' \emph{Sustainable Computing: Informatics and
  Systems}, vol.~37, p. 100827, 2023.

\bibitem{Fang2026ARXIVLLMSearch}
Z.~Fang, S.~F. Hu, Z.~Chang, Y.~Guo, Y.~Tao, H.~Liu, M.~Ruan, J.~Huang, and
  Y.~Fang, ``{Inference-Time Budget Control for LLM Search Agents},'' arXiv
  preprint arXiv:2605.05701, 2026.

\bibitem{Yuan2023ARXIV}
L.~Yuan, D.-J. Han, V.~P. Chellapandi, S.~H. \.{Z}ak, and C.~G. Brinton,
  ``{FedMFS: Federated Multimodal Fusion Learning with Selective Modality
  Communication},'' arXiv preprint arXiv:2310.07048, 2023.

\bibitem{Wu2026ARXIV}
B.~Wu, Z.~Ding, and J.~Huang, ``{RELIEF: Turning Missing Modalities into
  Training Acceleration for Federated Learning on Heterogeneous IoT Edge},''
  arXiv preprint arXiv:2604.04243, 2026.

\bibitem{Ding2026ARXIVEASE}
Z.~Ding, B.~Wu, and J.~Huang, ``{EASE: Federated Multimodal Unlearning via
  Entanglement-Aware Anchor Closure},'' arXiv preprint arXiv:2605.00733, 2026.

\bibitem{Lai2021OSDI}
F.~Lai, X.~Zhu, H.~V. Madhyastha, and M.~Chowdhury, ``{Oort: Efficient
  Federated Learning via Guided Participant Selection},'' in \emph{Proc. USENIX
  OSDI}, 2021.

\bibitem{Li2020MLSYS}
T.~Li, A.~K. Sahu, M.~Zaheer, M.~Sanjabi, A.~Talwalkar, and V.~Smith,
  ``{Federated Optimization in Heterogeneous Networks},'' in \emph{Proc.
  MLSys}, 2020.

\bibitem{Chen2023CVPR}
D.~Chen, J.~Hu, and V.~J. Tan, ``{Elastic Aggregation for Federated
  Optimization},'' in \emph{Proc. IEEE/CVF CVPR}, 2023, pp. 12\,187--12\,197.

\bibitem{Li2021ICLR}
X.~Li, M.~Jiang, X.~Zhang, M.~Kamp, and Q.~Dou, ``{FedBN: Federated Learning on
  Non-IID Features via Local Batch Normalization},'' in \emph{Proc. ICLR},
  2021.

\bibitem{Collins2021ICML}
L.~Collins, H.~Hassani, A.~Mokhtari, and S.~Shakkottai, ``{Exploiting Shared
  Representations for Personalized Federated Learning},'' in \emph{Proc. ICML},
  2021, pp. 2089--2099.

\bibitem{Guo2023ICML}
Y.~Guo, K.~Guo, X.~Cao, T.~Wu, and Y.~Chang, ``{Out-of-Distribution
  Generalization of Federated Learning via Implicit Invariant Relationships},''
  in \emph{Proc. ICML}, 2023, pp. 11\,905--11\,933.

\bibitem{Nguyen2022NeurIPS}
A.~T. Nguyen, P.~Torr, and S.~N. Lim, ``{FedSR: A Simple and Effective Domain
  Generalization Method for Federated Learning},'' in \emph{Proc. NeurIPS},
  2022.

\bibitem{Wu2026TNSE}
B.~Wu, Z.~Ding, and J.~Huang, ``{A Review of Continual Learning in Edge AI},''
  \emph{IEEE Transactions on Network Science and Engineering}, vol.~13, pp.
  6571--6588, 2026.

\bibitem{Wu2026ARXIVLifecycle}
B.~Wu and J.~Huang, ``{Lifecycle-Aware Federated Continual Learning in Mobile
  Autonomous Systems},'' arXiv preprint arXiv:2604.20745, 2026.

\bibitem{Wu2026ICDCS}
B.~Wu, J.~Huang, and Y.~Zhao, ``{From Alpha to Omega: Lifecycle-Aware
  Forgetting Defense in Federated Continual Learning for Planetary
  Exploration},'' in \emph{Proc. IEEE ICDCS}, 2026.

\bibitem{Caesar2020CVPR}
H.~Caesar \emph{et~al.}, ``{nuScenes: A Multimodal Dataset for Autonomous
  Driving},'' in \emph{Proc. IEEE/CVF CVPR}, 2020.

\bibitem{Wu2026MNET}
B.~Wu, J.~Huang, and Q.~Duan, ``{Real-Time Intelligent Healthcare Enabled by
  Federated Digital Twins With AoI Optimization},'' \emph{IEEE Network},
  vol.~40, no.~2, pp. 184--191, 2026.

\bibitem{Ding2026ARXIVTwinLoop}
Z.~Ding, B.~Wu, J.~Huang, and S.~Mao, ``{Application-Aware Twin-in-the-Loop
  Planning for Federated Split Learning over Wireless Edge Networks},'' arXiv
  preprint arXiv:2604.26105, 2026.

\bibitem{Ding2026ICDCS}
Z.~Ding, B.~Wu, and J.~Huang, ``{SCALE: Sensitivity-Aware Federated Unlearning
  with Information Freshness Optimization for Mobile Edge Computing},'' in
  \emph{Proc. IEEE ICDCS}, 2026.

\bibitem{Dinkelbach1967MS}
W.~Dinkelbach, ``{On Nonlinear Fractional Programming},'' \emph{Management
  Science}, vol.~13, no.~7, pp. 492--498, 1967.

\bibitem{Graham1966BSTJ}
R.~L. Graham, ``{Bounds for Certain Multiprocessing Anomalies},'' \emph{Bell
  System Technical Journal}, vol.~45, no.~9, pp. 1563--1581, 1966.

\end{thebibliography}
